\newenvironment{proof sketch}[1]{\noindent {\emph{Proof sketch of #1:}}}{\hfill \qed}
\newtheorem{theorem}{Theorem}
\newtheorem{proposition}{Proposition}
\newtheorem{lemma}{Lemma}
\newtheorem{coro}{Corollary}
\newtheorem{rem}{Remark}
\newcommand{\eqdf}{\triangleq}
\newcommand{\NNN}{[N]}
\newcommand{\dlp}{\textsc{d-lp}}
\newcommand{\plp}{\textsc{p-lp}}
\newcommand{\valp}{P}
\newcommand{\alg}{\textsc{alg}}
\newcommand{\opt}{\textsc{opt}}
\newcommand{\PR}{\Delta_t P}
\newcommand{\PRj}{\Delta_j P}
\newcommand{\PRaj}{\Delta_{a_j} P}
\newcommand{\PRbj}{\Delta_{b_j} P}
\newcommand{\PRx}[1]{\Delta_{#1} P}
\newcommand{\TS}{a_k}
\newcommand{\TE}{b_k}
\newcommand{\TSJ}{a_j}
\newcommand{\initt}{\text{\textsf{start}}}
\newcommand{\event}{\text{$\sigma$}}
\newcommand{\load}{\text{\textit{load}}}
\newcommand{\alive}{\text{\textit{Alive}}}
\newcommand{\dead}{\text{\textit{Dead}}}
\newcommand{\cut}{\text{\textit{cut}}}
\newcommand{\paths}{\text{\textit{paths}}_t}
\newcommand{\pathsp}{\text{\textit{paths}}_{t+1}}
\newcommand{\pathtt}[1]{\text{\textit{paths}}_{#1}}
\newcommand{\mainn}{\text{\textsc{Main}}}
\newcommand{\routealg}{\text{\textsc{Route}}}
\newcommand{\unroutealg}{\text{\textsc{UnRoute}}}
\newcommand{\depart}{\text{\textsc{Depart}}}
\newcommand{\reroutealg}{\text{\textsc{MakeFeasible}}}
\newcommand{\NN}{{\mathbb{N}}}
\begin{document}

\title{A Nonmonotone Analysis with the Primal-Dual Approach: online routing of virtual circuits with unknown durations}

\author{%
Guy Even\thanks{School of Electrical Engineering, Tel-Aviv
Univ., Tel-Aviv 69978, Israel.
\protect\url{{guy,medinamo}@eng.tau.ac.il}.\newline
M.M was partially funded
by the Israeli Ministry of Science and Technology.} \and Moti Medina$^*$
}
\date{}
\maketitle

\begin{abstract}
  We address the question of whether the primal-dual approach for the
  design and analysis of online algorithms can be applied to
  nonmonotone problems. We provide a positive answer by presenting a
  primal-dual analysis to the online algorithm of Awerbuch
  et al.~\cite{awerbuch2001competitive} for routing virtual circuits with unknown durations.
%
%\medskip \noindent
%\textbf{This paper is eligible for the Best Student Paper Award.}
\end{abstract}

%\paragraph{Keywords.}
%Online algorithms, primal-dual approach, non-monotone, linear programs.

\section{Introduction}
The analysis of most online algorithms is based on a potential
function (see, for example,~\cite{AAP, azar1997line, aspnes1997line, awerbuch2001competitive} in the context of online routing).
Buchbinder and Naor~\cite{BNsurvey} presented a primal-dual approach for
analyzing online algorithms.  This approach replaces the need to find
the appropriate potential function by the task of finding an
appropriate linear programming formulation.

The primal-dual approach presented by Buchbinder and Naor has a
monotone nature.  Monotonicity means that: (1)~Variables and
constraints arrive in an online fashion. Once a variable or constraint
appears, it is never deleted. (2)~Values of variables, if updated, are
only increased.  We address the question of whether the primal-dual
approach can be extended to analyze nonmonotone
algorithms\footnote{The only instance we are aware of in which the
  primal-dual approach is applied to nonmonotone variables appears
  in~\cite{buchbinder2011frequency}. In this instance, the change in
  the dual profit, in each round, is at least a constant times the change in the
  primal profit. In general, this property does not hold in a nonmonotone setting.}.

An elegant example of nonmonotone behavior occurs in the problem of
online routing of virtual circuits with unknown durations. In the
problem of routing virtual circuits, we are given a graph with edge
capacities. Each request $r_i$ consists of a
source-destination pair $(s_i,t_i)$.  A request $r_i$ is served by
allocating to it a path from $s_i$ to $r_i$. The goal is to serve the
requests while respecting the edge capacities as much as possible. In
the online setting, requests arrive one-by-one. Upon arrival of a
request $r_i$, the online algorithm must serve $r_i$. In the special
case of unknown durations, at each time step, the adversary may
introduce a new request or it may terminate an existing request.  When
a request terminates, it frees the path that was allocated to it, thus
reducing the congestion along the edges in the path.  The online
algorithm has no knowledge of the future; namely, no information about
future requests and no information about when existing requests will
end. Nonmonotonicity is expressed in this online problem in two ways:
(1)~Requests terminate thus deleting the demand to serve them. (2)~The
congestion of edges varies in a nonmonotone fashion; an addition of a
path increases congestion, and a deletion of a path decreases
congestion.

Awerbuch et al.~\cite{awerbuch2001competitive} presented an online
algorithm for online routing of virtual circuits when the requests
have unknown durations. In fact, their algorithm resorts to rerouting
to obtain a logarithmic competitive ratio for the load. Rerouting means
that the path allocated to a request is not fixed and the algorithm
may change this path from time to time. Hence, allowing rerouting
increases the nonmonotone characteristics of the problem.

We present an analysis of the online algorithm of Awerbuch et
al.~\cite{awerbuch2001competitive} for online routing of virtual circuits with
unknown durations. Our analysis uses the primal-dual approach, and hence
we show that the primal-dual approach can be applied in nonmonotone
settings.

\section{Problem Definition}
\subsection{Online Routing of Virtual Circuits with Unknown Durations}
Let $G=(V,E)$ denote a directed or undirected graph.
Each edge $e$ in $E$ has a capacity $c_e\geq 1$.  A routing request $r_k$ is a $4$-tuple $r_k=
(s_k,d_k,\TS,\TE)$, where
\begin{inparaenum}[(i)]
\item
  $s_k,d_k \in V$ are the source and the destination of the $k$th
  routing request, respectively,
\item $\TS\in \NN$ is both the arrival time and the start time of the request, and
\item $\TE \in \NN$ is the departure time or end time of the request.
\end{inparaenum}
Let $\Gamma_k$ denote the set of paths in $G$ from $s_k$ to $d_k$.
A request $r_k$ is served if it is allocated  a path in $\Gamma_k$.

Let $\NNN$ denote the set $\{0,\ldots,N\}$.
The input consists of a sequence of events
$\sigma=\{\event_t\}_{t \in \NNN}$.  We assume that time is discrete, and
event $\event_t$ occurs at time $t$.
There are two types of events:
\begin{inparaenum}[(i)]
\item An \emph{arrival} of a request. When a request $r_k$ arrives, we
  are given the source $s_k$ and the destination $d_k$. Note that the
  arrival time $\TS$ simply equals the current time $t$.
\item A \emph{departure} of a request. When a request $r_k$ departs
  there is no need to serve it anymore (namely, the departure time $\TE$ simply equals the  current time $t$).
\end{inparaenum}

The set of active requests at time $t$ is denoted by $\alive_t$ and is defined by
\[
    \alive_t \eqdf \{ r_k \mid \TS\lneqq t \leq \TE\}\:.
\]

An \emph{allocation} is a sequence $A=\{p_k\}_k$ of paths such that
$p_k$ is a path from the source $s_k$ to the destination $d_k$ of request $r_k$.
Let $\paths(e,A)$ denote the number of requests that are routed along edge $e$ by allocation $A$ at time $t$, formally:
\[
\paths(e,A) \eqdf \left|\{p_k : e\in p_k \text{ and } r_k \in \alive_t\}\right|\:.
\]
The \emph{load} of an edge $e$ at text $t$ is defined by
\[
\load_t(e,A) \eqdf \frac{\paths(e,A)}{c_e}\:.
\]
The \emph{load} of an allocation $A$ at time $t$ is defined by
\[
\load_t(A) \eqdf \max_{e\in E} \load_t(e,A) \:.
\]
The \emph{load} of an allocation $A$ is defined by
\[
\load(A) \eqdf \max_t \load_t (A)\:.
\]

An algorithm computes an allocation of paths to the requests, and
therefore we abuse notation and identify the algorithm with the
allocation that is computed by it. Namely, $\alg(\sigma)$ denotes the
allocation computed by algorithm $\alg$ for an input sequence
$\event$.

In the online setting, the events arrive one-by-one, and no information is known about an event before its arrival.
Moreover,
\begin{inparaenum}[(1)]
  \item the length $N$ of the sequence of events is unknown; the input simply stops at some point,
  \item the departure time $\TE$ is \emph{unknown} (and may even be determined later by the adversary),  and
  \item the online algorithm must allocate a path to the request as soon as the request arrives.
\end{inparaenum}

The \emph{competitive ratio} of an online algorithm $\alg$ with respect to $N \in \NN$, and a sequence $\event=\{\event_t\}_{t \in \NNN}$ is defined
by
\[
\rho(\alg(\sigma)) \eqdf \frac{\load(\alg(\sigma))}{\load(\opt(\sigma))} \:,
\]
where $\opt(\sigma)$ is an allocation with minimum load.
The \emph{competitive ratio} of an online algorithm $\alg$ is defined by
\[
\rho(\alg) \eqdf \sup_{N \in \NN}~\max_{\event} \rho(\alg(\sigma)) \:.
\]

\noindent
Note that since every request has a unit demand, we may assume that $c_e \geq 1$ for every edge $e \in E$.
\subsection{Rerouting}
In the classical setting, a request $r_k$ is served by a fixed single
path $p_k$ throughout the duration of the request.  The term \emph{rerouting} means that we allow the allocation to change the path $p_k$ that serves $r_k$.  Thus, there are two extreme cases: (i)~no rerouting at all is permitted (classical
setting), and (ii)~total
flexibility in which, a new allocation can be computed in each time step.

Following the paper by Awerbuch et al.~\cite{awerbuch2001competitive}, we allow the online algorithm to reroute each request
at most $O(\log |V|)$ times. In the analysis of the competitive ratio,
we compare the load of the online algorithm with the load of an
optimal (splittable) allocation with total rerouting flexibility. Namely, the
optimal solution recomputes a minimum load allocation at each time step, and, in addition may serve a request by a convex combination of paths.

\section{The Online Algorithm \alg}\label{sec:gen}
In this section we present the online algorithm \alg\ that is listed in Algorithm~\ref{alg:alg}.
Thus algorithm is equivalent to the algorithm presented in~\cite{awerbuch2001competitive}.

\medskip
\noindent
The algorithm maintains the following variables.
\begin{enumerate}
  \item For every edge $e$ a variable $x_e$. The value of $x_e$ is exponential in the load of edge $e$.
  \item For every request $r_k$ a  variable $z_k$. The value of $z_k$ is the complement of the ``weight'' of the path $p_k$ allocated to $r_k$ at the time the path was allocated.
  \item For every routing request $r_k$, and for every path $p \in \Gamma_k$ a variable $f_k(p)$. The value of $f_k(p)$ indicates whether $p$ is allocated to $r_k$. That is, the value of $f_k(p)$ equals $1$ if path $p$ is allocated for request $r_k$, and $0$ otherwise.
\end{enumerate}

The algorithm \alg\ consists of the following $5$ procedures:
(1)~\mainn, (2)~\routealg, (3)~\depart, (4)~\unroutealg, and (5)~\reroutealg.

The \mainn\ procedure begins with initialization.
For every $e \in E$, $x_e$ is initialized to $\frac{1}{4m}$, where $m=|E|$.
For every $k \in \NNN$, $z_k$ is initialized to zero.
For every $k \in \NNN$, and for every path $p$, $f_k(p)$ is  initialized to zero.
Since the number of $z_k$ and $f_k(p)$ variables is unbounded, their initialization is done in a ``lazy'' fashion; that is, upon arrival of the $k$th request the corresponding variables are set to zero.

The main procedure \mainn\ proceeds as follows.
For every time step $t \in \NNN$ , if
the event $\event_t$ is an arrival of a request, then the \routealg\ procedure is invoked.
Otherwise, if
the event $\event_t$ is a departure of a request, then the \depart\ procedure is invoked.

The \routealg\ procedure serves request $r_k$ by allocating a
``lightest'' path $p_k$ in the set $\Gamma_k$ (recall that $\Gamma_k$ denotes
the set of paths from the source $s_k$ to the destination $d_k$). The
allocation is done by two actions. First, the allocation of $p_k$ to request
$r_k$ is indicated by setting $f_k(p_k) \leftarrow 1$. Second, the loads of the edges along $p_k$ are updated by increasing the
variables $x_e$ for $e \in p_k$.  The variable $z_k$ equals the
``complement'' weight of the allocated path $p_k$.  Note that this
complement is with respect to half the weight of the path before its
update.

The \depart\ procedure ``frees'' the path that is allocated for $p_k$,
by calling the \unroutealg\ procedure.  The \unroutealg\ procedure
frees $p_k$ by nullifying $f_k(p_k)$ and $z_k$, and by decreasing the
edge variables $x_e$ for the edges along $p_k$.  The freeing of $p_k$
decreases the load along the edges in $p_k$. As a result of this
decrease, it may happen that a path allocated
to an alive request might be very heavy compared to a lightest path.
In such a case, the request should be rerouted. This is why the
 \reroutealg\ procedure is invoked after the \unroutealg\ procedure.

Rerouting is done by the \reroutealg\ procedure. This rerouting is done by freeing a path and then routing the request again.
Requests with improved alternative paths are rerouted.
%, as follows.
%The \reroutealg\ procedure generates a sequence of ``dummy'' events, that is, it terminates routing requests by generating a departure event, and reestablish the request by generating an arrival event.

The listing of the online algorithm \alg\ appears in Algorithm~\ref{alg:alg}.

\begin{algorithm}
\footnotesize
    \underline{\textbf{\mainn(}$\event_t$)}
    \begin{algorithmic}[1]
     \STATE $\forall k\in \NNN : z_k \leftarrow 0$.
     \STATE$\forall e\in E : x_e \leftarrow \frac{1}{4m}$, where $m=|E|$.
     \STATE$\forall r_k \in \NNN ~\forall p : f_k(p) \leftarrow 0$.
     \STATE \textbf{Upon } {arrival of event $\event_t$ }\textbf{do}
       \STATE \textbf{\quad if} $\event_t$ is an arrival of request $r_k$ \textbf{then}
        \textbf{Call \routealg($r_k$)}. \label{line:main5}
       \STATE \textbf{\quad else} ($\event_t$ is an departure of request $r_k$)
        \textbf{Call \depart($r_k$)}.
    \end{algorithmic}

    \smallskip
    \underline{$\textbf{\routealg($r_k$)}$}
    \begin{algorithmic}[1]
        \STATE \label{state:min path} Find the ``lightest'' path: $ p_k \leftarrow \text{argmin}\{\sum_{e\in p'}\frac{x_{e}}{c_e} \mid p' \in \Gamma_k\}$.
        \STATE \label{state:route z}$z_k \leftarrow 1- \frac 12 \cdot \sum_{e\in p_k}\frac{x_{e}}{c_e}$.
        \STATE Route $r_k$ along $p_k$: $f_k(p_k) \leftarrow 1$.
        \STATE \textbf{for all} $e\in p_k$ \textbf{do}
            \STATE \quad \label{state:route x}$x_{e} \leftarrow x_{e}\cdot \lambda_e$ where $\lambda_e \eqdf \left( 1 + \frac{1}{4c_e} \right)$.
            \COMMENT {Update edge ``load''}
    \end{algorithmic}

    \smallskip
    \underline{$\textbf{\depart($r_k$)}$}
    \begin{algorithmic}[1]
        \STATE \textbf{Call} \unroutealg($r_k$).
        \STATE \textbf{Call} \reroutealg($x,z$).
    \end{algorithmic}

    \smallskip
    \underline{$\textbf{\unroutealg($r_k$)}$}
    \begin{algorithmic}[1]
        \STATE Free variables: $z_k , f_k(p_k)$.
        \STATE \textbf{for all} $e\in p_k$ \textbf{do}
            \STATE \quad \label{state:unroute x}$x_{e}\leftarrow x_{e}/\lambda_e$ where $\lambda_e\eqdf \left( 1 + \frac{1}{4c_e} \right)$.
            \COMMENT {Update edge ``load''}
    \end{algorithmic}

    \smallskip
    \underline{$\textbf{\reroutealg($x,z$)}$}
    \begin{algorithmic}[1]
        \STATE {$\forall r_j \in \alive_t$ \textbf{if} $~\exists p \in \Gamma_j~: z_j + \sum_{e\in p}\frac{x_{e}}{c_e} < 1$ \textbf{then}}
            \STATE \textbf{\quad Call} \unroutealg($r_j$).
            \STATE \textbf{\quad Call} \routealg($r_j$). \label{line:makefeas3}
    \end{algorithmic}
\caption{\alg: Online routing algorithm. The input consists of (1)~ a graph $G=(V,E)$ where each $e\in E$ has capacity $c_e$, and (2)~a sequence of events $\sigma = \{\event_t\}_{t\in\NNN}$.
%For every arrival event, i.e.,  routing request $r_k$, the main procedure invokes the \routealg\ procedure, which allocates a path $p_k$. For every departure event of request $r_k$ the main procedure invokes the \depart\ procedure, which releases the allocated path $p_k$ and reroutes some active requests.
}\label{alg:alg}
\end{algorithm}

\section{Primal-Dual Analysis of \alg}~\label{sec:analysis} In this section we prove that the load on
every edge is always $O(\log |V|)$, and that each request is rerouted
at most $O(\log |V|)$ times.  We refer to an input sequence $\event$
as \emph{feasible} if there is an allocation $A$, such that for all
requests that are alive at time $t$, it holds that $\load_t(A) \leq
1$.  The following theorem holds under the assumption that the input
sequence $\event$ is feasible. Note that the removal of this
assumption increases the competitive ratio only by a constant factor
by standard doubling techniques~\cite{awerbuch2001competitive}.

\begin{theorem}[\cite{awerbuch2001competitive}]~\label{thm:main result}
If the input sequence $\event$ is feasible and assuming that $c_e \geq 1$, then \alg\ is:
 \begin{enumerate}
   \item An $O(\log |V|)$-competitive online algorithm.
   \item Every request is rerouted at most $O(\log |V|)$ times.
 \end{enumerate}
\end{theorem}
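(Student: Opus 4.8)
The plan is to recast the potential-function argument of Awerbuch et al.\ as an online primal--dual analysis. First I would write down the natural pair of linear programs for fractional routing at a fixed time $t$: the primal \plp\ minimizes the load $\lambda$ subject to $\sum_{p\in\Gamma_k}f_k(p)\ge 1$ for every $r_k\in\alive_t$ and $\sum_{r_k\in\alive_t}\sum_{p\ni e}f_k(p)\le c_e\lambda$ for every edge $e$; the dual \dlp\ has a variable $z_k\ge 0$ per alive request and $y_e\ge 0$ per edge, it maximizes $\sum_{r_k\in\alive_t}z_k$ subject to $\sum_e y_e\le 1$ and $z_k\le\sum_{e\in p}y_e/c_e$ for every $r_k\in\alive_t$, $p\in\Gamma_k$. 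The feasibility hypothesis says the \plp\ optimum equals $\load_t(\opt(\sigma))\le 1$, so by weak duality \emph{every} \dlp-feasible point has objective at most $1$. This is the only use of the hypothesis, and it will cap a quantity attached to the algorithm's state at all times.

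Second, I would reduce Part~1 to a statement about a potential. Since each call to \routealg\ on a path through $e$ multiplies $x_e$ by $\lambda_e$ and each call to \unroutealg\ divides it back by $\lambda_e$, one has the exact identity $x_e=\tfrac1{4m}\lambda_e^{\paths(e)}$ at all times, with $m=|E|\le|V|^2$. Hence $\load_t(e)=\paths(e)/c_e=\ln(4m x_e)/(c_e\ln\lambda_e)$, and since $c_e\ln\lambda_e=c_e\ln(1+\tfrac1{4c_e})$ is a constant (between $\ln\tfrac54$ and $\tfrac14$) for $c_e\ge 1$, we get $\load_t(e)=O(\ln(4m\,x_e))=O(\ln(4m\,P_t))$, where $P_t\eqdf\sum_{e}x_e$. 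So Part~1 follows once we show $P_t=\mathrm{poly}(m)$ (in fact $P_t=O(1)$), using $\log m=O(\log|V|)$.

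Third --- the heart of the proof --- I would bound $P_t$ by running the primal--dual bookkeeping with $\{f_k(p)\}$ as the primal solution and $x_e$ (suitably normalized) together with the $z_k$ as the dual solution. The mechanism is that \routealg\ serves $r_k$ by the path $p_k$ minimizing $\sum_{e\in p}x_e/c_e$, i.e.\ the shortest path under the dual edge weights, which is exactly what keeps the dual constraints satisfiable and lets one charge the increment $\Delta P=\sum_{e\in p_k}x_e(\lambda_e-1)=\tfrac14\sum_{e\in p_k}x_e/c_e$ against OPT: minimality of $p_k$ gives $\sum_{e\in p_k}x_e/c_e\le\sum_{p}f^*_k(p)\sum_{e\in p}x_e/c_e$, and summing such inequalities over $\alive_t$ and invoking that OPT's load is $\le 1$ collapses the right-hand side to $\sum_e x_e$. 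Carried out over a maximal run of consecutive \emph{arrival} events --- during which neither \unroutealg\ nor \reroutealg\ is invoked, so the $x_e$ are monotone --- this already yields $P_{\mathrm{end}}\le\tfrac43 P_{\mathrm{start}}$, hence $P_t\le\tfrac13<1$ in the purely monotone case. The main obstacle is precisely the non-monotone behaviour the paper advertises: \depart\ (via \unroutealg) \emph{decreases} $P$ while simultaneously retracting dual profit, and \reroutealg\ both decreases $P$ (the \unroutealg) and increases it again (the \routealg), so the usual online primal--dual invariant ``the dual objective rises by at least a constant fraction of the rise in the primal objective at every step'' --- the property flagged in the introduction's footnote --- genuinely fails. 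The way around it is to do the accounting \emph{per request instead of per step}: show that the $P$-increment produced when $r_k$ is (re)routed stays matched, at every later time until $r_k$ departs, by $r_k$'s contribution to a \dlp-feasible solution built from the current $x_e$ and $z_k$, so that $P_t$ is always sandwiched between $P_0=\tfrac14$ and $P_0+O(1)\cdot(\text{value of that dual solution})\le P_0+O(1)$ by weak duality. Here \reroutealg\ is essential: the test $z_j+\sum_{e\in p}x_e/c_e<1$ is exactly the condition that $z_j$ would violate the matching dual constraint, and re-routing restores it.

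Finally, Part~2 rides on the same quantities. When \routealg\ serves $r_k$ it sets $z_k=1-\tfrac12\sum_{e\in p_k}x_e/c_e$, so a reroute of $r_k$ is performed only when some $p$ has $\sum_{e\in p}x_e/c_e<1-z_k=\tfrac12\sum_{e\in p_k}x_e/c_e$; since \unroutealg\ only lowers the $x_e$'s, the path chosen by the next \routealg\ has ``route weight'' $\sum_{e\in p}x_e/c_e$ at most half of the previous one. The route weight is at most $P_t=O(1)$ when the request first arrives, and because $x_e\ge\tfrac1{4m}$ and $c_e\ge 1$ it never drops below $1/\mathrm{poly}(|V|)$; halving from $O(1)$ down to $1/\mathrm{poly}(|V|)$ can occur at most $O(\log|V|)$ times, which is the claimed bound on the number of reroutes.
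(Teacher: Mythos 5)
Your overall architecture (exact identity $x_e=\tfrac1{4m}\lambda_e^{\paths(e)}$, reduce Part~1 to a bound on the primal variables, use weak duality against the feasibility hypothesis, and get Part~2 from the halving of route weights) matches the paper's. But the heart of the argument --- your third step --- is asserted rather than proved, and the specific mechanism you propose is the one that provably fails. You suggest accounting ``per request instead of per step,'' matching the increment $\PRaj$ caused when $r_j$ is routed against a quantity that persists until $r_j$ departs. The difficulty is that the decrease at departure, $\sum_{e\in p_j}x_e^{(b_j+1)}/(4c_e)$, is computed with the \emph{current} values of $x_e$, which may be much smaller than the values $x_e^{(a_j)}$ used at arrival (because other requests sharing $e$ departed in between); so per request the arrival and departure contributions do \emph{not} cancel, and the paper states this explicitly before Lemma~\ref{eqn:np}. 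The paper's resolution is a global averaging argument: a combinatorial interval-permutation lemma (Proposition~\ref{prop:intervals}) pairs each arrival with the departure of a possibly \emph{different} request having the same cut value, so that $x_e^{(a_j)}-x_e^{(b_{\pi(j)}+1)}=0$ edge by edge and $\sum_{r_j\in\dead_t}(\PRaj+\PRbj)\le 0$ in aggregate; this, combined with the observation that each request crossing the interval on which $x_e$ climbs from $1$ to $3$ contributes $\PRaj<1-\tfrac1{4c_e}$, yields the contradiction with weak duality. Without this lemma (or a substitute), neither $\sum_e x_e=O(1)$ nor the paper's weaker $x_e\le 3$ is established, so Part~1 is not proved. (Incidentally, the paper never needs $\sum_e x_e=O(1)$; it bounds each $x_e$ individually, which suffices since $\paths(e)=\log_{\lambda_e}(4mx_e)$.)

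A secondary, fixable error sits in your Part~2: the claim that the route weight ``never drops below $1/\mathrm{poly}(|V|)$'' does not follow from $x_e\ge\tfrac1{4m}$ and $c_e\ge1$, since $\sum_{e\in p}x_e/c_e\ge\tfrac1{4m}\sum_{e\in p}1/c_e$ can be arbitrarily small when capacities are large. The paper sidesteps this by bounding the weight of the allocated path from above by $3\sum_{e\in p^*}1/c_e$ (via $x_e\le3$ and minimality, where $p^*$ minimizes $\sum_{e\in p}1/c_e$) and from below by $\tfrac1{4m}\sum_{e\in p^*}1/c_e$, so the two bounds share the capacity-dependent factor and the number of halvings is at most $\log_2(12m)$ independently of the $c_e$. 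Your upper bound at arrival time also leans on the unproved $P_t=O(1)$, whereas the paper's upper bound needs only $x_e\le 3$.
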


We point out that the allocation computed by $\alg$ is
\emph{nonsplittable} in the sense that at every given time each
request is served by a single path.  The optimal allocation, on the
other hand, is both totally
flexible and \emph{splittable}. Namely, the optimal allocation may reroute all
the requests in each time step, and, in addition, may serve a request
by a convex combination of paths.

\medskip \noindent
The rest of the proof is as follows.
We begin by formulating a packing and covering programs for our problem in Section~\ref{sec:lp}.
We then prove Lemma~\ref{lemma:x ub} in Section~\ref{sec:proofof5}.
We conclude the analysis with the proof of Theorem~\ref{thm:main result} in Section~\ref{sec:main}
\subsection{Formulation as an Online Packing Problem}\label{sec:lp}
For the sake of analysis, we define for every prefix of events $\{\event_j\}_{j=1}^t$ a \emph{primal} linear program $\plp(t)$ and its \emph{dual} linear program $\dlp(t)$. The primal LP is a \emph{covering} LP, and the dual LP is a \emph{packing} LP.
The LP's appear in Figure~\ref{fig:LPframe}.

The variables of the LPs correspond to the variables maintained by
\alg, as follows.  The covering program $\plp(t)$ has a variable $x_e$
for every edge $e \in E$, and a variable $z_k$ for every $r_k \in
\alive_t$.  The packing program $\dlp(t)$ has a variable $f_k(p)$ for
every request $r_k \in \alive_t$, and for every path $p \in \Gamma_k$.  The
variable $f_k(p)$ equals to the fraction of $r_k$'s ``demand'' that is
routed along path $p \in \Gamma_k$.

The dual LP has three types of constraints: capacity constraints,
demand constrains, and sign constraints.  In the fractional setting
the load of an edge is defined by
\[
\load_t(e) \eqdf \frac{1}{c_e} \cdot
\sum_{r_k \in \alive_t} \sum_{\{p \mid p\in \Gamma_k, e \in p\}}f_k(p)
\:.
\]
The capacity constraint in the dual LP requires that the load of each
edge is at most one.  The demand constraints require that each request
$r_k$ that is alive at time $t$ is allocated a convex combination of
paths.

%The demand constraints imply that i
If the dual LP is feasible, then the
objective function of the dual LP simply equals the number of requests
that are alive at time step $t$, i.e., $|\alive_t|$.

The primal LP has two types of constraints: covering constraints and
sign constraints.  The covering constraints requires that for every
request $r_k$ that is alive and for every path $p \in \Gamma_k$, the sum of
$z_k$ and the ``weight'' of $p$ is at least~$1$.  Note that the sign
constraints apply only to the edge variables $x_e$ whereas the request
variables $z_k$ are free.

\begin{figure}
%\small
%\footnotesize
%\scriptsize
%\tiny
\centering
\begin{tabular}{| l |}
\hline
    %$\underline{\plp(t)}:$\\
    \begin{minipage}{0.67\textwidth}
     \begin{center}
        \begin{eqnarray*}
        \underline{\plp(t)}:~~ \min~  \sum_{r_k \in \alive_t} z_k +\sum_{e\in E} x_{e} &\text{s.t.}& \\
         \forall r_k\in \alive_t ~\forall p \in \Gamma_k: z_k + \sum_{e\in p}\frac{x_{e}}{c_e}  &\geq& 1\text{ \footnotesize (Covering Constraints.)}\\
          x&\geq& \vec{0}
        \end{eqnarray*} \\
        (I)
     \end{center}
    \end{minipage}
\\
\hline
    %$\underline{\dlp(t)}:$\\
    \begin{minipage}{0.67\textwidth}
     \begin{center}
        \begin{eqnarray*}
         \underline{\dlp(t)}:~~~~~\max~ \sum_{r_k \in \alive_t} \sum_{p \in \Gamma_k} f_k(p) &\text{s.t.}&\\
         \forall e\in E: \frac{1}{c_e} \cdot \sum_{r_k \in \alive_t} \sum_{\{p \mid p\in \Gamma_k, e \in p\}}f_k(p)&\leq& 1\text{ \footnotesize (Capacity Constraints.)} \\
         \forall r_k\in \alive_t: \sum_{p \in \Gamma_k}f_k(p) & = & 1\text{ \footnotesize (Demand Constraints.)}\\ f &\geq& \vec{0}
        \end{eqnarray*}
        \\(II)
     \end{center}
    \end{minipage}
    \\
\hline
\end{tabular}
\caption{
(I) The primal LP, $\plp(t)$.
(II) The dual LP, $\dlp(t)$.}
   \label{fig:LPframe}
\end{figure}

Note that the assumption that $\event$ is feasible is equivalent to requiring that
the dual program $\dlp(t)$ is feasible for every $t$.

\subsection{Bounding the Primal Variables}\label{sec:proof}\label{sec:proofof5}
In this section we prove  that the primal variables $x_e$ are bounded by a constant, as formalized in the following Lemma.
 \begin{lemma}\label{lemma:x ub}
If $\sigma_t$ is an original event, then
  $$\forall e \in E~:x_{e}^{(t)} \leq 3\:.$$
\end{lemma}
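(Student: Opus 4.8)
\textbf{Proof proposal.}
The first thing I would do is rewrite the claim in a form that does not depend on the (nonmonotone) history. Every time \routealg\ allocates a path it multiplies $x_e$ by $\lambda_e$ for each $e$ on that path, and every time \unroutealg\ frees a path it divides $x_e$ by $\lambda_e$ for those edges. Hence, at every point in the execution,
$x_e = \tfrac{1}{4m}\,\lambda_e^{\,n_e}$,
where $n_e$ is the number of requests whose currently allocated path uses $e$ (equivalently, $c_e$ times the current load of $e$). The variables $z_k$ play no role in this identity, so the lemma is equivalent to the statement: after every original event, $\lambda_e^{\,n_e}\le 12m$, i.e. $n_e\le \ln(12m)/\ln\lambda_e$, which is an $O(\log m)$ bound on the load of every edge (using $c_e\ge 1$, so that $\lambda_e=1+\tfrac{1}{4c_e}\in(1,\tfrac54]$ and $c_e\ln\lambda_e=\Theta(1)$). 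This is the quantity I would bound.

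I would prove the bound by induction on the original events, and the engine is the classical online‑routing potential argument (cf.~\cite{AAP}) driven by three facts about \routealg. (a)~\emph{Lightest‑path selection}: whenever \routealg($r_k$) runs it chooses $p_k$ minimizing the weight $w(p_k)\eqdf\sum_{e\in p_k}x_e/c_e$. (b)~\emph{Cost of a route}: the increase of $\sum_e x_e$ caused by this call equals $\sum_{e\in p_k}x_e(\lambda_e-1)=\sum_{e\in p_k}\tfrac{x_e}{4c_e}=\tfrac14 w(p_k)$, while \unroutealg\ only decreases $\sum_e x_e$. (c)~\emph{Dual feasibility}: since $\sigma$ is feasible, $\dlp(t)$ admits a feasible solution $f^*$; in particular $f^*$ routes $r_k\in\alive_t$ with $\sum_{p\in\Gamma_k}f^*_k(p)=1$, and every edge carries at most $c_e$ units of $f^*$‑flow, i.e. $\sum_{r_i\in\alive_t}g_{i,e}\le c_e$ where $g_{i,e}=\sum_{p\in\Gamma_i,\,p\ni e}f^*_i(p)$. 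Because $p_k$ has minimum weight, $w(p_k)\le\sum_p f^*_k(p)\,w(p)=\sum_e (x_e/c_e)\,g_{k,e}$. Aggregating (a)–(c) over the \routealg\ calls that built up the factors of $\lambda_e$ in $x_e$, and using the capacity bounds $\sum_i g_{i,e}\le c_e$ to cap the total weight that $f^*$ can place on the paths the algorithm was forced to pick, bounds the growth of $\sum_e x_e$, hence $n_e$, hence $x_e$; the specific threshold $3$ then comes out of the initialization $x_e=\tfrac1{4m}$, the step size $\lambda_e=1+\tfrac1{4c_e}$, and $\lambda_e\le\tfrac54$ (so that a single \routealg\ call cannot overshoot).

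The part I expect to be the real work — and the reason this is a \emph{nonmonotone} analysis — is making the aggregation in the previous paragraph legitimate. In the monotone setting one compares each routed path's weight with $f^*$ taken at the final time (which is feasible for \emph{all} requests) and telescopes using that $x_e$ never decreases; here requests depart (so the relevant $f^*$ need not route them) and are rerouted (so one request accounts for several \routealg\ calls and $x_e$ is not monotone, going up and down inside \reroutealg). So the telescoping must be replaced by an argument anchored at a single time — the end of processing $\sigma_t$ — where one works with the current set $\alive_t$ and one feasible $f^*$ for $\dlp(t)$, which is the only moment at which all the capacity constraints $\sum_i g_{i,e}\le c_e$ can be invoked simultaneously; each relevant \routealg\ call, even though it occurred earlier when the $x_e$'s looked different, must be charged against this single $f^*$, and one must separately control the transient \routealg/\unroutealg\ activity inside \reroutealg\ so that the bound is restored by the end of the cascade even if it can be momentarily violated during it. Getting these two ingredients to cooperate is the crux; once they do, the constant $3$ is the slack (relative to the smaller constant of the purely monotone analysis) that absorbs departures and reroutings.
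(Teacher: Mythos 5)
Your proposal assembles the right preliminaries — the multiplicative identity $x_e=\tfrac{1}{4m}\lambda_e^{n_e}$ (this is Lemma~\ref{lemma:valx} in the paper), the lightest-path property, the $\tfrac14 w(p_k)$ cost of a routing step, and the intention to invoke dual feasibility of $\dlp(t)$ — but it stops exactly where the proof actually begins. You explicitly defer ``making the aggregation legitimate'' in the nonmonotone setting, and that aggregation \emph{is} the lemma. The paper resolves it with two concrete devices that are absent from your sketch. First, an averaging lemma (Lemma~\ref{eqn:np}): the total contribution to the primal objective $P_t=\sum_k z_k+\sum_e x_e$ from requests that have already departed is nonpositive, i.e. $\sum_{r_j\in\dead_t}(\Delta_{a_j}P+\Delta_{b_j}P)\le 0$. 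This is proved by a purely combinatorial interval argument: for each edge $e$ one pairs arrivals with departures via a permutation $\pi$ chosen so that the ``cut'' values agree, which forces $x_e^{(a_j)}=x_e^{(b_{\pi(j)}+1)}$ and makes the arrival and departure increments cancel exactly (requests still alive only help). Individually $\Delta_{a_j}P+\Delta_{b_j}P$ can be positive, so a per-request or per-call charging scheme of the kind you describe does not go through; the cancellation only holds on average over the matched pairs. Second, a localization step: take $t_2$ minimal with $x_e^{(t_2)}>3$ at an original event and $t_1$ the last time $x_e^{(t_1)}<1$; then more than $4c_e$ requests that are still alive at $t_2$ were routed through $e$ while $x_e\ge 1$, so each of their arrival increments satisfies $\Delta_{a_j}P<1-\tfrac{1}{4c_e}$ by Equation~\ref{eqn:case 1}. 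Summing, $P_{t_2}<\tfrac14+|\alive_{t_2}|-\tfrac{|\alive_{\in e}(t_1,t_2)|}{4c_e}<|\alive_{t_2}|$, which contradicts weak duality since the primal variables are feasible for $\plp(t_2)$ and the dual optimum is $|\alive_{t_2}|$.

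Two further mismatches with your plan. Your proposed comparison $w(p_k)\le\sum_e (x_e/c_e)g_{k,e}$ anchored at a single $f^*$ for $\dlp(t)$ is the AAP-style potential argument; the paper never compares path weights against $f^*$ directly — dual feasibility enters only through the single fact that the dual optimum equals $|\alive_{t_2}|$. And your intention to ``separately control the transient activity inside \reroutealg'' is handled in the paper not by a separate argument but by introducing dummy departure/arrival event pairs, so that reroutes are absorbed into the same arrival/departure bookkeeping and Lemma~\ref{eqn:np} applies verbatim. As written, your proposal is a plausible program but not a proof: the cancellation mechanism for departures and reroutes, which is the entire content of the nonmonotone analysis, is missing.
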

\noindent
The proof of Lemma~\ref{lemma:x ub} is based on a few lemmas that we prove first.

\medskip
\paragraph{Notation.}
Let $x^{(t)}_e,z_k^{(t)}$ denote the value of the primal variables $x_e, z_k$ before event $\event_t$ is processed by \alg.
Let $\valp_t$ denote the objective function's value of $\plp(t)$, formally:
\[
    \valp_t \eqdf \sum_{r_k \in \alive_t} z_k^{(t)} +\sum_{e\in E}x_e^{(t)}\:.
\]
Let $\PR \eqdf \valp_{t+1} - \valp_{t}$.

Note that $P_t$ refers to the value of $\plp(t)$ at the beginning of
time step $t$.  The definition of $\alive_t$ implies that the
constraints and variables of $\plp(t)$ are not influenced by the event
$\sigma_t$ (this happens only for $\plp(t+1)$). Hence the variables in
the definition of $\valp_t$ are indexed by time step $t$.

\paragraph{Dummy events.}
The procedure \routealg\ is invoked in two places: (i)~in
Line~\ref{line:main5} of \mainn\ as a result of an arrival of a
request, or (ii)~in Line~\ref{line:makefeas3} of \reroutealg.  To
simplify the discussion, we create ``dummy'' events each time the
\reroutealg\ procedure reroutes a request.  Dummy events come in
pairs: first a dummy departure event for request $r_k$ is introduced,
and then a dummy arrival event for a ``continuation'' request $r_k$ is introduced.  The
combination of original events and dummy events describes the
execution of \alg.  The augmentation of the original input sequence of
events by dummy events does not modify the optimal value of the dual
LP at time steps $t$ that correspond to original events.  Hence, we
analyze the competitive ratio $\rho(\alg(\sigma))$ by analyzing the
competitive ratio with respect to the augmented sequence at time steps
$t$ that correspond to original events.

\medskip
The following lemma follows immediately from the description of the
algorithm \alg\ and the definition of dummy events.
\begin{lemma}[\textbf{Primal Feasibility}]\label{lemma:2x feas}\label{lemma:stability}
  If $\sigma_t$  is an
  original event, then the variables $\{x_e^{(t)}\}_{e \in E} \cup
  \{z_\ell^{(t)}\}_{\ell \in \alive_t}$ constitute a feasible
  solution for $\plp(t)$.
\end{lemma}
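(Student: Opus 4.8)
The plan is to verify that the covering constraints of $\plp(t)$ are satisfied by the variables $\{x_e^{(t)}\}_{e\in E}\cup\{z_\ell^{(t)}\}_{\ell\in\alive_t}$, by tracking how these variables are set and maintained by \alg. For each request $r_\ell\in\alive_t$, the variable $z_\ell$ was last assigned in Line~\ref{state:route z} of \routealg\ when $r_\ell$ was (re)routed along its current path $p_\ell$, namely $z_\ell = 1 - \frac12\sum_{e\in p_\ell}\frac{x_e}{c_e}$, where the $x_e$'s here are the values \emph{right after} the multiplicative update of Line~\ref{state:route x}. Since each $\lambda_e = 1+\frac{1}{4c_e} \le 2$, the post-update weight of $p_\ell$ is at most twice its pre-update weight, so $z_\ell \ge 1 - \sum_{e\in p_\ell}\frac{x_e^{\text{pre}}}{c_e}$; but more directly, writing $w_\ell$ for the post-update weight $\sum_{e\in p_\ell}\frac{x_e}{c_e}$ at routing time, we have $z_\ell = 1-\frac12 w_\ell$, and at that moment $z_\ell + \sum_{e\in p_\ell}\frac{x_e}{c_e} = 1 + \frac12 w_\ell \ge 1$. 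The key observation is then that the $x_e$ variables are \emph{monotone between reroutings of a given request}: from the moment $r_\ell$ is routed along $p_\ell$ until $r_\ell$ is next rerouted or departs, no \unroutealg\ call ever decreases an $x_e$ with $e\in p_\ell$ — indeed, if some $x_e$, $e\in p_\ell$, were about to be decreased, that would be a departure/unroute of another request sharing $e$, after which \reroutealg\ is invoked, and the condition in Line~1 of \reroutealg\ checks precisely whether $z_\ell + \sum_{e\in p}\frac{x_e}{c_e} < 1$ for some $p\in\Gamma_\ell$.

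So the argument splits into two cases. First, if the covering constraint for $r_\ell$ and its currently allocated path $p_\ell$ still holds, i.e. $z_\ell + \sum_{e\in p_\ell}\frac{x_e^{(t)}}{c_e}\ge 1$, then since $p_\ell$ is the lightest path in $\Gamma_\ell$ at routing time and — by the monotonicity just described — the $x_e$ for $e\in p_\ell$ have only increased since, one needs to argue the constraint holds for \emph{all} $p\in\Gamma_\ell$. Here is where one should be careful: it is not automatic that $p_\ell$ remains lightest. Instead, the right invariant to maintain is that \reroutealg\ is invoked after \emph{every} \unroutealg\ (every departure triggers \depart, which calls \unroutealg\ then \reroutealg), and \reroutealg\ loops until \emph{no} alive request $r_j$ has \emph{any} path $p\in\Gamma_j$ violating $z_j+\sum_{e\in p}\frac{x_e}{c_e}<1$. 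Consequently, immediately after any sequence of departures and reroutings completes, for every alive $r_j$ and every $p\in\Gamma_j$ we have $z_j+\sum_{e\in p}\frac{x_e}{c_e}\ge 1$ — this is exactly the covering constraint. The only subtlety is that \routealg\ calls inside \reroutealg\ increase some $x_e$'s, potentially re-violating a constraint for another request; but this is precisely why \reroutealg\ uses a \textbf{while}-style loop over all alive requests (or is applied repeatedly), and the dummy-event bookkeeping together with the termination guarantee (part 2 of Theorem~\ref{thm:main result}, or the bound in Lemma~\ref{lemma:x ub}) ensures this process terminates. At an \emph{original} event $\sigma_t$, the algorithm has finished all rerouting from the previous step, so the invariant holds.

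The main obstacle I expect is making the loop-termination and "invariant holds after \reroutealg\ completes" reasoning airtight without circularity — one wants to state Lemma~\ref{lemma:stability} without yet having Lemma~\ref{lemma:x ub} or the rerouting bound. The cleanest route is to observe that Lemma~\ref{lemma:stability} is a statement about the state \emph{at the beginning of} an original event $\sigma_t$, i.e. after \reroutealg\ has terminated; assuming termination (which is established separately), the exit condition of \reroutealg\ is syntactically identical to the covering constraint of $\plp(t)$ quantified over all alive requests and all their paths, and the only additional check needed is that newly-arrived requests from the most recent \routealg\ also satisfy their constraints — which follows from the $z_k = 1-\frac12 w_k$ assignment giving slack $\frac12 w_k\ge 0$ on the allocated path and optimality of $p_k$ giving $\sum_{e\in p}\frac{x_e}{c_e}\ge\sum_{e\in p_k}\frac{x_e^{\text{pre}}}{c_e}\ge\frac12\sum_{e\in p_k}\frac{x_e^{\text{post}}}{c_e} = \frac12 w_k = 1-z_k$, hence $z_k+\sum_{e\in p}\frac{x_e}{c_e}\ge 1$ for every $p\in\Gamma_k$. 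Assembling these pieces gives the lemma; the bulk of the work is really just unwinding the definitions, which is why the authors call it immediate.
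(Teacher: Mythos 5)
Your proposal is correct and takes essentially the same route as the paper, whose one-sentence proof likewise rests on the observation that the exit condition of \reroutealg\ (no alive request $r_j$ has a path $p\in\Gamma_j$ with $z_j+\sum_{e\in p}\frac{x_e}{c_e}<1$) is syntactically the covering constraint of $\plp(t)$, while freshly routed requests satisfy their constraints by the $z_k$-assignment and the lightest-path choice, so feasibility holds at the start of every original event. Two harmless slips: $z_k$ is set in Line~\ref{state:route z} \emph{before} the update of Line~\ref{state:route x}, i.e.\ from the pre-update weight (this only increases $z_k$ relative to your reading, and your inequality chain still goes through), and your interim claim that no \unroutealg\ call decreases an $x_e$ with $e\in p_\ell$ is false as stated (departures of other requests sharing $e$ do decrease it) --- but your final argument correctly relies only on the \reroutealg\ invariant, not on that claim.
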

\begin{proof}
  When an original event $\sigma_{t'}$ occurs, the \reroutealg\
  procedure generates dummy events at the end of the time step to
  guarantee that the primal variables are a feasible solution of the
  primal LP.  Hence, if $\sigma_t$ is an original event, then the
  primal variables at the beginning of time step $t$ are a feasible
  solution for $\plp(t)$.
\end{proof}
%  Note that if $\sigma_t$ is an arrival event, then the corresponding covering constraints appear in $\plp(t+1)$, and by the design of \alg\ these constraints are feasible.
%  Moreover, if  $\sigma_t$ is a departure event, then the corresponding covering constraints disappear in $\plp(t+1)$, i.e., remain feasible in $\plp(t)$.

\begin{lemma}\label{delta primal arrival}
If $\sigma_t$ is an arrival of request, then $\PR < 1$.
\end{lemma}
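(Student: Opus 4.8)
The plan is to compute $\PR$ directly from the description of \alg. If $\event_t$ is an arrival of request $r_k$, then \mainn\ invokes only \routealg($r_k$), and a call to \routealg\ made from \mainn\ triggers no rerouting. Consequently the only quantities that change during time step $t$ are the newly created variable $z_k$ and the edge variables $x_e$ for $e$ on the chosen path $p_k$: no other $z_\ell$ moves, no $x_e$ with $e \notin p_k$ moves, and $\alive_{t+1} = \alive_t \cup \{r_k\}$ (the last because at most one event occurs per time step, so no request departs at $t$). Hence $\valp_{t+1} - \valp_t$ collects exactly two contributions.

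Writing $W \eqdf \sum_{e \in p_k} \frac{x_e^{(t)}}{c_e}$ for the weight of the lightest path at the moment it is selected in Line~\ref{state:min path} — that is, \emph{before} the update of the $x_e$'s in Line~\ref{state:route x} — the first step is to record the two effects of \routealg: by Line~\ref{state:route z}, $z_k^{(t+1)} = 1 - \tfrac12 W$, and by Line~\ref{state:route x}, $x_e^{(t+1)} = \lambda_e\, x_e^{(t)} = \bigl(1 + \tfrac{1}{4c_e}\bigr)\,x_e^{(t)}$ for each $e \in p_k$. (It matters here that $z_k$ is set from the pre-update edge values.) Summing,
\[
\PR \;=\; z_k^{(t+1)} + \sum_{e \in p_k}\bigl(x_e^{(t+1)} - x_e^{(t)}\bigr) \;=\; \Bigl(1 - \tfrac12 W\Bigr) + \sum_{e \in p_k}\frac{x_e^{(t)}}{4 c_e} \;=\; 1 - \tfrac12 W + \tfrac14 W \;=\; 1 - \tfrac14 W.
\]

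It then remains to observe $W > 0$: assuming $s_k \neq d_k$, every path in $\Gamma_k$ — in particular $p_k$ — has at least one edge, and every edge variable is strictly positive throughout the run ($x_e$ starts at $\tfrac{1}{4m}$, and every division by $\lambda_e$ in \unroutealg\ is matched by an earlier multiplication by $\lambda_e$ in \routealg, so $x_e^{(t)} \geq \tfrac{1}{4m}$). Therefore $W > 0$ and $\PR = 1 - \tfrac14 W < 1$. The only real subtlety — the "main obstacle", such as it is — is the accounting in the first paragraph: verifying that an arrival event moves nothing in $\valp$ beyond $z_k$ and the $x_e$'s on $p_k$, and being precise about the order of operations inside \routealg\ so that the cancellation $-\tfrac12 W + \tfrac14 W$ is exact; once that is pinned down, the conclusion is a one-line computation.
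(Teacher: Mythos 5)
Your proposal is correct and follows essentially the same route as the paper: both compute $\PR = 1 - \tfrac14\sum_{e\in p_k} x_e^{(t)}/c_e$ from Lines~\ref{state:route z} and~\ref{state:route x} of \routealg\ and conclude the bound. Your extra care in justifying strict positivity of the path weight (nonempty path, $x_e \geq \tfrac{1}{4m}$) is a detail the paper leaves implicit, but it is the same argument.
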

\begin{proof}
  Assume that $\sigma_t$ is an event in which request $r_k$ arrives.
  In Step~\ref{state:route z} of the \routealg\ algorithm $z_k$ is set
  to $1-\frac 12 \cdot \sum_{e\in p_k}\frac{x_{e}^{(t)}}{c_e}$.  In
  Step~\ref{state:route x} of the \routealg\ algorithm, for every $e
  \in p_k$, $x_e$ is increased by $\frac{x_e^{(t)}}{4c_e}$.  All the
  other edge variables $x_e$ remain unchanged.  Hence,
    \begin{align}
      \PR =&  1- \frac 12 \cdot \sum_{e\in p_k}\frac{x_{e}^{(t)}}{c_e} + \sum_{e\in p_k} \frac{x_e^{(t)}}{4c_e} \nonumber\\
          =&1- \frac 14 \cdot \sum_{e\in p_k} \frac{x_e^{(t)}}{c_e}  \label{eqn:case 1}\\
          < & 1\:, \nonumber
    \end{align}
    as required.
\end{proof}

We refer to the number of requests that are routed along edge $e$ by allocation $\alg$ at time $t$ by $\paths(e)$.
\begin{lemma}\label{lemma:valx}
For every $t$ and $e \in E$,
    \begin{align*}
      x_e ^{(t)} = \frac{1}{4m} \cdot \lambda_e^{\paths(e)}\:.
    \end{align*}
\end{lemma}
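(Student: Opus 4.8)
The plan is to prove the closed-form expression by induction on time steps, tracking how $x_e$ changes through each of the five procedures. The base case is the initialization in \mainn: at the start, $x_e = \frac{1}{4m}$ and $\paths(e) = 0$ (no request is alive yet), so $x_e = \frac{1}{4m} \cdot \lambda_e^0$ holds trivially. For the inductive step, I would examine how the value of $x_e$ and the quantity $\paths(e)$ change together as the algorithm processes an event.

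First I would observe that the only places $x_e$ is modified are Step~\ref{state:route x} of \routealg\ (where $x_e \leftarrow x_e \cdot \lambda_e$ for each $e \in p_k$) and Step~\ref{state:unroute x} of \unroutealg\ (where $x_e \leftarrow x_e / \lambda_e$ for each $e \in p_k$). The key point is that each such modification is perfectly synchronized with a change in $\paths(e)$: when \routealg\ allocates path $p_k$ to $r_k$, it sets $f_k(p_k) \leftarrow 1$, which increments $\paths(e)$ by exactly one for every $e \in p_k$, and it multiplies $x_e$ by $\lambda_e$ for exactly those same edges; symmetrically, \unroutealg\ nullifies $f_k(p_k)$, decrementing $\paths(e)$ by one for each $e \in p_k$, while dividing $x_e$ by $\lambda_e$ for those same edges. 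Thus the identity $x_e = \frac{1}{4m}\lambda_e^{\paths(e)}$ is preserved by every call to \routealg\ and \unroutealg.

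Next I would note that \depart\ and \reroutealg\ modify $x_e$ only through their calls to \unroutealg\ and \routealg, so no separate argument is needed for them — one just applies the invariant repeatedly through the sequence of sub-calls. The only subtlety worth a sentence is the treatment of dummy events and departures: a departure event removes $r_k$ from $\alive_t$, which is precisely what makes $\paths(e)$ drop; but since \depart\ first calls \unroutealg$(r_k)$ (which frees $p_k$ and so removes $r_k$'s contribution to $\paths(e)$ for the edges of $p_k$), the bookkeeping again matches. I would formalize this by saying: let $\event_t$ be any event; assume the identity holds before $\event_t$; then after each atomic operation performed while processing $\event_t$ (an allocation or a freeing of a single path), both sides change in lockstep, so the identity holds after $\event_t$ as well.

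The main obstacle — really the only thing requiring care — is making the simultaneity of the two updates precise, i.e. confirming that whenever $x_e$ is multiplied (resp. divided) by $\lambda_e$, the counter $\paths(e)$ is incremented (resp. decremented) by exactly one for that same edge $e$, and that there is no moment where one is updated without the other. This follows directly from the listing of Algorithm~\ref{alg:alg}: in \routealg, the assignment $f_k(p_k) \leftarrow 1$ and the loop updating $x_e$ for $e \in p_k$ both range over the edges of the same path $p_k$; likewise in \unroutealg. Since $\paths(e)$ is defined as $|\{p_k : e \in p_k \text{ and } r_k \in \alive_t\}|$, each such $f$-update changes the count by one on exactly the edges whose $x$-value is rescaled. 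Hence the invariant is maintained, completing the induction.
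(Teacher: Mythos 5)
Your proof is correct and follows essentially the same route as the paper: an induction over the sequence of events whose inductive step observes that every multiplication (resp.\ division) of $x_e$ by $\lambda_e$ in \routealg\ (resp.\ \unroutealg) is accompanied by an increment (resp.\ decrement) of $\paths(e)$ by one for exactly the same edges. Your explicit decomposition of \depart\ and \reroutealg\ into atomic \routealg/\unroutealg\ calls is just a more spelled-out version of what the paper achieves via its dummy-event convention.
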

\begin{proof}
  The proof is by induction on $t$. At time $t=0$, we have
  $x_e^{(0)}=\frac{1}{4m}$ and $\paths(e)=0$.  The proof of the
  induction basis for $t+1$ depends on whether at time step $t$ an
  arrival or a departure occurs.  If the event does not affect edge
  $e$, then the induction step clearly holds. Assume that the event
  affects edge $e$. If a request $r_k$ arrives at time $t$, then
  $\pathsp(e)=\paths(e)+1$ and $x_e^{(t+1)} = x_e^{(t)} \cdot \lambda_e$.
If a request $r_k$ departs at time $t$, then
  $\pathsp(e)=\paths(e)-1$ and $x_e^{(t+1)} = x_e^{(t)}/\lambda_e$.
\end{proof}

Let $\dead_t \eqdf \{r_k \mid b_k < t\}$.
In general, it is not true that $\PRaj+\PRbj \leq 0$, however on average it is true, as stated in the following lemma.

\begin{lemma}\label{eqn:np}
    For every $t$,
    \begin{align}
      \sum_{r_j \in \dead_t}\left(\PRaj +\PRbj \right) \leq 0\:. \label{eqn:canceling}
    \end{align}
\end{lemma}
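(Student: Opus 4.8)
The plan is to understand the net contribution of each dead request $r_j$ to the primal potential, using the two explicit formulas already established: the change caused by an arrival (Lemma \ref{delta primal arrival}) and the relation between $x_e^{(t)}$ and $\paths(e)$ (Lemma \ref{lemma:valx}). The key observation is that a request $r_j \in \dead_t$ contributes to $\valp$ at two moments: when it arrives (at time $a_j$) and when it departs (at time $b_j$). At arrival, Lemma \ref{delta primal arrival} gives $\PRaj = 1 - \frac14 \sum_{e \in p_j} \frac{x_e^{(a_j)}}{c_e}$, where $p_j$ is the path allocated to $r_j$ and the $x_e$ values are taken just before the arrival. At departure, \unroutealg\ removes $z_j$ (which equals $1 - \frac12 \sum_{e\in p_j}\frac{x_e^{(b_j)}}{c_e}$, with $x$ values just before departure) and divides each $x_e$ for $e \in p_j$ by $\lambda_e$, i.e.\ decreases $x_e$ by $x_e^{(b_j)}(1 - 1/\lambda_e) = \frac{x_e^{(b_j)}}{4c_e+1}$. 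So $\PRbj = -z_j - \sum_{e\in p_j}\frac{x_e^{(b_j)}}{4c_e+1}$.

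Adding these, the two ``$1$''s cancel against $-z_j = -1 + \frac12\sum_{e\in p_j}\frac{x_e^{(b_j)}}{c_e}$, leaving
$$
\PRaj + \PRbj = -\tfrac14 \sum_{e\in p_j}\frac{x_e^{(a_j)}}{c_e} + \tfrac12 \sum_{e\in p_j}\frac{x_e^{(b_j)}}{c_e} - \sum_{e\in p_j}\frac{x_e^{(b_j)}}{4c_e+1}.
$$
The last two terms combine into something like $\sum_{e\in p_j}\frac{x_e^{(b_j)}}{c_e}\bigl(\frac12 - \frac{c_e}{4c_e+1}\bigr)$, which is roughly $\frac14 \sum_{e\in p_j}\frac{x_e^{(b_j)}}{c_e}$ (and strictly so since $\frac{c_e}{4c_e+1} < \frac14$). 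Thus the per-request net contribution is essentially $\frac14\bigl(\sum_{e\in p_j}\frac{x_e^{(b_j)}}{c_e} - \sum_{e\in p_j}\frac{x_e^{(a_j)}}{c_e}\bigr)$ — which is \emph{not} individually nonpositive, because between $a_j$ and $b_j$ other requests may have arrived on the edges of $p_j$, making $x_e^{(b_j)}$ larger than $x_e^{(a_j)}$. This is where summing over all dead requests is needed.

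To get the cancellation in aggregate, I would use Lemma \ref{lemma:valx} to rewrite each term: the contribution is (up to the harmless constant-factor slack above) a sum over edges $e$ of $\frac{1}{4m c_e}(\lambda_e^{n_e(b_j)} - \lambda_e^{n_e(a_j)})$, where $n_e(t) = \paths(e)$. I would reorganize the whole double sum $\sum_{r_j\in\dead_t}\sum_{e\in p_j}(\cdots)$ by first fixing an edge $e$ and summing over dead requests whose path uses $e$. Order these requests by arrival time; since a request sitting on edge $e$ at time $a_j$ occupies it until $b_j$, and $\paths(e)$ counts exactly the alive requests on $e$, one can argue that the contributions $\lambda_e^{n_e(b_j)} - \lambda_e^{n_e(a_j)}$ telescope (or bound a telescoping sum) down edge by edge. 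Intuitively, each time $\paths(e)$ goes up by one and later comes back down, the ``$+$'' term at the higher count is matched by a ``$-$'' term at an equal-or-higher count for the request that caused that increment, leaving at most the base level. The main obstacle will be making this rearrangement precise: one must track, for a fixed edge $e$, how the multiset of values $\{n_e(a_j)\}$ and $\{n_e(b_j)\}$ over the relevant requests relate, and confirm that $\sum_j \lambda_e^{n_e(b_j)} \le \sum_j \lambda_e^{n_e(a_j)}$ (so that, combined with the $\tfrac12$ vs $\tfrac14$ slack, the total stays $\le 0$). Careful bookkeeping of which request is "responsible" for each unit of load on $e$ — matching the $j$th increment with the $j$th decrement in a stack-like fashion — should close this gap.
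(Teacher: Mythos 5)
There is a genuine gap, and it sits at the very first step of your per-request computation. At the departure time $b_j$, the variable $z_j$ freed by \unroutealg\ is the value that \routealg\ assigned at the \emph{arrival} time, $z_j = 1-\frac12\sum_{e\in p_j}x_e^{(a_j)}/c_e$; it is never recomputed during the request's lifetime (rerouting is handled by dummy departure/arrival pairs, which simply restart the lifetime). You instead evaluate $z_j$ with the $x$-values at time $b_j$. With the correct value the two constants and the $z$-terms cancel exactly and
\begin{align*}
\PRaj + \PRbj \;=\; \sum_{e\in p_j}\frac{1}{4c_e}\left(x_e^{(a_j)} - x_e^{(b_j+1)}\right),
\end{align*}
i.e.\ the arrival value enters with a plus sign and the departure value with a minus sign --- the opposite sign structure of your expression. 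Your expression is also internally inconsistent: for a single isolated request on an otherwise empty edge the primal value returns exactly to its initial value, so the net contribution is $0$, while your formula gives the strictly positive quantity $\frac{x_e^{(a_j)}}{8c_e^{2}}$ (using $x_e^{(b_j)}=\lambda_e x_e^{(a_j)}$). Relatedly, the ``$\frac12$ vs $\frac14$ slack'' works against you, not for you: the coefficient $\frac12-\frac{c_e}{4c_e+1}$ on the $b_j$-terms is strictly larger than $\frac14$, and the per-edge inequality your plan would need, $\sum_j \lambda_e^{n_e(b_j)}\le\sum_j\lambda_e^{n_e(a_j)}$, already fails for that single isolated request.

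The aggregation idea in your second half is sound and is essentially the paper's argument: fix an edge, use Lemma~\ref{lemma:valx} to write the differences as powers of $\lambda_e$, and pair each arrival with a departure at the same load level via a stack-like matching --- the paper formalizes this as Proposition~\ref{prop:intervals}, applied to the intervals $[a_j+1,b_j]$, yielding a permutation $\pi$ with equal cut sizes, hence $x_e^{(a_j)} = x_e^{(b_{\pi(j)}+1)}$ when $\alive_t=\emptyset$ and $x_e^{(a_j)} \le x_e^{(b_{\pi(j)}+1)}$ in general (alive requests only raise the load at the matched departure). But that matching delivers equality (or a nonpositive difference) for the \emph{correct} expression; it cannot deliver the reversed inequality your decomposition requires. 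Once you fix the value of $z_j$, your outline reduces to the paper's proof.
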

\begin{proof}
    First we prove the following proposition.
    \begin{proposition}\label{prop:intervals}
      Consider a set of $I=\{I_j=[\alpha_j,\beta_j]\}_{j=1}^q$ such that no
      two intervals share a common endpoint.  Let $\cut(t)$ denote the number of intervals that contain $t$.
    Then, there is a  permutation $\pi:[1,q] \rightarrow [1,q]$ such that
    \begin{align}
        \forall j\in [1,q]~:~\cut(\alpha_j) = \cut(\beta_{\pi(j)})\:.\label{eqn:mapping}
    \end{align}
    \end{proposition}
    \begin{proof}
      The proof is by induction on the number of intervals.  The induction basis,
      for $q=1$ holds trivially because $\cut(\alpha_1) = \cut(\beta_1)=1$.
      The proof of the induction step is based on the existence of a
      pair $\alpha_i<\beta_j$ such that the open interval $(\alpha_i,\beta_j)$ does not
      contain any endpoint of the intervals in $I$. For such a pair,
      we immediately have $\cut(\alpha_i)=\cut(\beta_j)$ so we define
      $\pi(i)=j$ and apply the induction hypothesis.

      We first show that such a pair $\alpha_i<\beta_j$ exists.  We say that an
      interval $I_m$ is \emph{minimal} if $I_m \cap I_k \neq
      \emptyset$ implies that $I_m \subseteq I_k$.  If there exists a
      minimal interval $I_m$, then set $\alpha_i=\alpha_{m}$ and $\beta_j=\beta_{m}$. In
      such a case since $\pi(m)=m$, we can erase $I_m$ and proceed by
      applying the induction hypothesis to the remaining intervals. Note that equality of cut sizes is preserved when the interval $I_m$ is deleted.

      Consider the set of pairs of
      intersecting intervals without containment defined as follows
\[
A \eqdf \{ (i,j) \mid \alpha_j<\alpha_i<\beta_j<\beta_i\}\:.
\]
If there is no minimal interval, the set $A$ is not empty.
Any pair $(i,j)\in A$ that minimizes the difference $(\beta_j-\alpha_i)$ has the property
that the interval $(\alpha_i,\beta_j)$ lacks endpoints of intervals in $I$.

We can define $\pi(i)=j$. We proceed by applying the induction
hypothesis on $\left(I \setminus \{I_j,I_i\}\right)\cup I_{k}$, where
$I_{k} = I_{i}\cup I_{j}$. Note that equality of cut sizes is preserved when $I_i$ and $I_j$ are merged into one interval.
\end{proof}

The difference $\PRx{a_j}$ consists of two parts:
\[
\PRx{a_j} = z_j^{(a_j+1)} +\sum_{e\in p_j} \frac{x_e^{(a_j)}}{4c_e}.
\]
The difference $\PRx{b_j}$ consists of two parts as well:
\[
\PRx{b_j} = -z_j^{(b_j)} -\sum_{e\in p_j} \frac{x_e^{(b_j+1)}}{4c_e}.
\]

It follows that
\begin{align*}
       \sum_{r_j \in \dead_t}\left(\PRaj +\PRbj \right) =&\sum_{r_j \in \dead_t} \sum_e \frac{1}{4c_e}\cdot \left(x_e^{(a_j)} - x_e^{(b_j+1)}\right)\\
=&
\sum_{r_j \in \dead_t} \sum_e \frac{1}{4c_e}\cdot \left(x_e^{(a_j)} - x_e^{(b_{\pi(j)}+1)}\right),\\
\end{align*}
where $\pi$ is any permutation over the set of requests.  In fact, we
shall use for each edge $e$, a different permutation $\pi=\pi(e)$ that
is a permutation over the requests $r_k$ such that $e\in p_k$.

Assume first that $\alive_t =\emptyset$. We later lift this
assumption.

Fix an edge $e$.  For each request $r_j$ such that $e\in p_j$, map the
duration $(a_j,b_j]$ of request $r_j$ to the interval $[a_j+1,b_j]$.
The resulting set of intervals satisfies $\cut(t)=\paths(e)$ for every time step $t$.  Let
$\pi$ denote the permutation guaranteed by Prop.~\ref{prop:intervals}.
Then, it suffices to prove that
\begin{align}\label{eq:local}
x_e^{(a_j)} - x_e^{(b_{\pi(j)}+1)} =0.
\end{align}
Indeed, by Lemma~\ref{lemma:valx},
$4m \cdot\left(x_e^{(a_j)} - x_e^{(b_{\pi(j)}+1)}\right) = \lambda_e^{\pathtt{a_j}}-\lambda_e^{\pathtt{b_{\pi(j)}+1}}$.
In addition, the property of permutation $\pi$ states that $\cut(a_j+1)=\cut(b_{\pi(j)})$.
It follows that $\pathtt{a_j+1}=\pathtt{b_{\pi(j)}}$.
But, $\pathtt{a_j} =\pathtt{a_j+1}-1$ and
$\pathtt{b_{\pi(j)}+1} = \pathtt{b_{\pi(j)}}-1$, and Equation~\ref{eq:local} follows.

To complete the proof, consider the requests in $\alive_t$.
Because $a_j, b_{\pi(j)}\leq t$, requests in $\alive_t$ do not increase the difference $x_e^{(a_j)} - x_e^{(b_{\pi(j)}+1)}$. Thus
$x_e^{(a_j)} - x_e^{(b_{\pi(j)}+1)}\leq 0$, and the lemma follows.
\end{proof}

We are now ready to prove Lemma~\ref{lemma:x ub}.  Recall that
Lemma~\ref{lemma:x ub} states that the primal variables $x_e$ are
bounded by a constant. The proof of Lemma~\ref{lemma:x ub} is by
contradiction. In fact, we reach a contradiction to \emph{weak
  duality}, that is, we show that the value of the primal solution is
strictly smaller than the value of a feasible dual solution.

\begin{proof}[Proof of Lemma~\ref{lemma:x ub}]
  The proof is by contradiction.  Assume $x_e^{(t)}>3$ and $\sigma_t$ is an original event. Define
\[
t_2 \eqdf \min\{t \mid
x_e^{(t)} >3 \text{ and $\sigma_t$ is an original event}\}.
\]
Let $t_1$ be the time step for which $x_e^{(t_1)}<1$ and $x_e^{t'}
\geq 1$ for every $t'\in [t_1+1,t_2]$.

Define:
\begin{align*}
\alive_{\in e}(t_1,t_2) &\eqdf \{r_j \mid t_1 <  a_j < t_2 < b_j, e \in p_j\}.
\end{align*}

Let $\delta_e$ denote
the difference between the number of arrivals and the number of
departures in the time interval $[t_1,t_2)$ among the requests that
were routed along $e$.
Clearly $\delta_e \leq |\alive_{\in e}(t_1,t_2)|$.

    Lemma~\ref{lemma:valx} implies that
    \begin{align}
      x_e^{(t_2)} =& x_e^{(t_1)}\cdot \left(1+
        \frac{1}{4c_e}\right)^{\delta_e}. \nonumber \label{eqn:gneq2}
    \end{align}
The assumption that $x_e^{(t_2)} >3$ and
$x_e^{(t_1)}<1$ imply
      \[
        \left(1+\frac{1}{4c_e}\right)^{\delta_e} \geq 3\:.
      \]
Since  $1+x \leq e^x$, it follows that $\delta_e > 4\cdot c_e$.
Hence,
    \begin{align}
       |\alive_{\in e}(t_1,t_2)| > 4\cdot c_e \label{eqn:bigdelta}\:.
    \end{align}

By Equation~\ref{eqn:case 1},
for each $r_j\in \alive_{\in e}(t_1,t_2)$, we have:
\begin{align}\label{eq:Praj}
  \PRaj<1-\frac{1}{4c_e}.
\end{align}
Hence,
    \begin{eqnarray}
        \valp_{t_2} & = & \frac {1}{4m}\cdot m + \sum_{t=0}^{t_2-1} \PR  \nonumber\\
        & = & \frac 14+ \sum_{r_j \in \dead_{t_2}}( \PRaj +\PRbj) + \sum_{r_j \in \alive_{t_2}} \PRaj \nonumber\\
        & \leq & \frac 14+  \sum_{r_j \in \alive_{t_2}} \PRaj \nonumber\\
        & < & \frac 14+
        |\alive_{t_2}|
-
        \frac{|\alive_{\in e}(t_1,t_2)| }{4c_e}
\nonumber\\
  %      & < & \frac 14+  |\alive_{t_2}| - \frac{\delta_e}{4c_e} \nonumber\\
        & < &|\alive_{t_2}| \label{eqn: finaly}\:.
    \end{eqnarray}
    The justification for these lines is as follows.
    The first line follows from the initialization of the primal variables.
    The second line follows since every event in time step $t\in[0,t_2-1]$ is either
an arrival of a request in $\dead_{t_2}\cup \alive_{t_2}$ or a departure of a request in $\dead_{t_2}$.
    The third inequality is due to Lemma~\ref{eqn:np}.
    The fourth equation is due to Equation~\ref{eq:Praj}.
    The last inequality follows from Equation~\ref{eqn:bigdelta}.

    By Lemma~\ref{lemma:stability}, the primal variables at time $t_2$
    are a feasible solution of $\plp(t_2)$.  The optimal value of
    $\dlp(t_2)$ equals $|\alive_{t_2}|$.  Hence, Equation~\ref{eqn:
      finaly} contradicts weak duality, and the lemma follows.
\end{proof}
\subsection{Proof of Theorem~\ref{thm:main result}}\label{sec:main}
We now turn to the proof of the main result.
The proof is as follows.

\begin{proof}[Proof of Theorem~\ref{thm:main result}]
We begin by proving the bound on the competitive ratio.
    Lemma~\ref{lemma:valx} states that
    \[
        \forall t  ~~\forall e\in E : x_e = \frac{1}{4m}\cdot \left(1+\frac{1}{4c_e}\right)^{\paths(e)}\:.
    \]
    Hence, by Lemma~\ref{lemma:x ub}, for each original event $\sigma_t$,
    \[
        \forall e\in E : \frac{1}{4m}\cdot \left(1+\frac{1}{4c_e}\right)^{\paths(e)} \leq 3\:.
    \]
    Since $2^x \leq 1 + x$ for all $x \in [0,1]$, it follows that for each original event $\sigma_t$
    \[
\forall e\in E : \paths(e) \leq c_e \cdot 4 \log(12m)\:,
    \]
    and the first part of the theorem follows.

\medskip
\noindent
We now prove the bound on the number of reroutes.  Rerouting an alive
request $r_j$ occurs if there exists a path $p \in \Gamma_j$ such that
$\sum_{e\in p}\frac{x_{e}}{c_e} < 1-z_j$.  By Line~\ref{state:route z}
of the \routealg\ algorithm, this condition is equivalent to:
$\sum_{e\in p}\frac{x_{e}}{c_e} < \frac{1}{2}\cdot \sum_{e\in
  p_j}\frac{x_{e}^{(\TSJ)}}{c_e}$.  Namely, each time a request is
rerouted, the weight of the path is at least halved.
Note that the halving is with respect to the weight of the path at the time it was allocated.

Let us consider request $r_j$. Let $p^* \eqdf \text{argmin}_{p\in \Gamma_j}\{\sum_{e\in p}\frac{1}{c_e}\}$.
By the choice of a ``lightest'' path and by Lemma~\ref{lemma:x ub}, the weight of path $p_j$ is upper bounded by
\[
\sum_{e\in p_j} \frac{x_e}{c_e} \leq \sum_{e\in p^*} \frac{x_e}{c_e} \leq 3 \cdot \sum_{e\in p^*} \frac{1}{c_e}\:.
\]

By Lemma~\ref{lemma:valx}, $x_e \geq 1/(4m)$, hence the weight of path $p_j$ is lower bounded by
\[
\sum_{e\in p} \frac{x_e}{c_e} \geq \frac {1}{4m} \cdot \sum_{e\in p} \frac{1}{c_e} \geq \frac {1}{4m} \cdot\sum_{e\in p^*} \frac{1}{c_e}.
\]

It follows that the number of reroutes each request undergoes is bounded by
$\log_2 \left( 12 m\right)$,
and the second part of the theorem follows.
\end{proof}
\begin{rem}
  Note that the first routing request will not be rerouted at all, the second routing request will be rerouted at most twice, and so on.
  In general, a routing request that arrives at time $t$ will be rerouted at most $|\alive_t|$ times.
%  Hence, Theorem~\ref{thm:main result} can be extended as follows.
%  \begin{theorem}~\label{thm:main result2}
%If $\event$ is feasible and assuming that $c_e \geq 1$, then \alg\ is:
% \begin{enumerate}
%   \item An $O(\log(|V|))$-competitive online algorithm..
%   \item Request $r_k$ that arrives in time $t$ is rerouted at most $\min\{|\alive_t|,O(\log (|V|))\}$ times.
% \end{enumerate}
%\end{theorem}
\end{rem}

\section{Discussion}
We present a primal-dual analysis of an online algorithm in a
nonmonotone setting. Specifically, we analyze the online algorithm by
Awerbuch et al.~\cite{awerbuch2001competitive} for online routing of
virtual circuits with unknown durations.  We think that the main
advantage of this analysis is that it provides an alternative
explanation to the stability condition for rerouting that appears
in~\cite{awerbuch2001competitive}.  According to the primal-dual
analysis, rerouting is used simply to preserve the feasibility of the
solution of the covering LP.

Our analysis provides a small improvement compared
to~\cite{awerbuch2001competitive} in the following sense.  The optimal
solution in our analysis is both totally flexible (i.e., may
reroute every request in every time step) and splittable (i.e., may
serve a request using a convex combination of paths). The optimal
solution in the analysis of Awerbuch et
al.~\cite{awerbuch2001competitive} is only totally flexible and must
allocate a path to each request.

The primal-dual approach of Buchbinder and Naor~\cite{BNsurvey} is
based on bounding the change in the value of the primal solution by
the change in the dual solution (this is often denoted by $\Delta P
\leq \Delta D$). The main technical challenge we encountered was that
this bound simply does not hold in our case. Instead, we use an
averaging argument to prove an analogous result (see Lemma~\ref{eqn:np}).

\bibliographystyle{alpha}
\bibliography{reroute}

\end{document}